\pdfoutput=1
\documentclass[a4paper,11pt]{article}
\usepackage[T1]{fontenc}
\usepackage{lmodern}
\usepackage{amsmath,amssymb}
\usepackage{amsthm}
\usepackage{booktabs}
\usepackage{geometry}
\usepackage{graphicx}
\usepackage{float}
\usepackage[hidelinks]{hyperref}
\usepackage{bookmark}
\hypersetup{unicode=true}

\newtheoremstyle{uprighttheorem}
  {6pt}{6pt}{\normalfont}{}{\bfseries}{.}{0.5em}{}
\theoremstyle{uprighttheorem}
\newtheorem{definitioninner}{Definition}
\newtheorem{lemmainner}{Lemma}
\newtheorem{propositioninner}{Proposition}
\newtheorem{theoreminner}{Theorem}

\newenvironment{theorem}[1][]{\begin{theoreminner}[#1]}{\end{theoreminner}}

\title{Rate-$2/3$ Girth-$8$ $(3,18)$-Regular Quantum LDPC Codes\\
from Two-Branch Finite-Field Bases and CPM Lifts}
\author{Koki Okada and Kenta Kasai\\Institute of Science Tokyo\\
\texttt{okada.k.3154@m.isct.ac.jp}\\
\texttt{kenta@ict.eng.isct.ac.jp}}
\date{}

\begin{document}
\maketitle

\begin{abstract}
We construct a rate-$2/3$ quantum low-density parity-check (LDPC) code from a $(3,18)$-regular two-branch
finite-field base and a circulant-permutation-matrix (CPM) lift of degree $P=101$.
The resulting code is a Calderbank--Shor--Steane (CSS) code with parameters $[[34542,23032,18]]$.
Its distance is established by a computer-assisted proof: an explicit nontrivial logical operator gives the upper
bound, and a complete symmetry-reduced enumeration excludes every nonzero vector of weight below 18 in both
parity-check kernels.  We also prove that every affine-in-$t$ CPM lift of the same base satisfying CSS
orthogonality has distance at most 18 for every prime lift degree $P>19$.
The construction has row weight 18 and column weight 3, and the Tanner graphs of $H_X$ and $H_Z$ separately have girth 8.
Decoder experiments with log-likelihood-ratio (LLR) joint belief propagation (BP) and deterministic
post-processing show no failures in $10^8$ trials at $p=0.01$, and a finite-length frame error rate (FER)
sweep estimates the transition near $p=0.029$.
\end{abstract}

\section{Introduction}

Classical low-density parity-check (LDPC) codes were introduced as sparse parity-check codes in~\cite{gallager1962ldpc},
and Tanner graphs became a standard language for describing their sparse constraints in~\cite{tanner1981recursive}.
In the quantum setting, Calderbank--Shor--Steane (CSS) codes~\cite{calderbank_shor1996good,steane1996multiple}
give a direct way to build stabilizer codes from two binary parity-check matrices, but the two matrices must satisfy
an exact orthogonality condition. This condition makes quantum LDPC design more restrictive than the classical case:
sparsity, rate, short-cycle control, and commutation must be achieved simultaneously.

Early work adapted sparse graphs to stabilizer codes~\cite{mackay2004sparse}. Product constructions then provided
systematic ways to obtain families with positive rate and growing distance, including hypergraph product
codes~\cite{tillich_zemor2014hypergraph}, balanced product codes~\cite{breuckmann_eberhardt2021balanced}, lifted
product codes~\cite{panteleev_kalachev2022almost}, asymptotically good quantum LDPC codes, and quantum Tanner
codes~\cite{panteleev_kalachev2022asymptotically,leverrier_zemor2022tanner}. These results establish the
asymptotic viability of quantum LDPC codes, while finite-length design still requires explicit choices of degree
distribution, orthogonality constraints, lift structure, and decoder.

High rate is also a practical requirement in fault-tolerant quantum computation (FTQC). The number of physical
qubits per protected logical qubit affects the size of memory blocks, the cost of logical operations, and the
hardware scale required for large computations. Constant-overhead FTQC can be formulated in terms of quantum LDPC
families~\cite{gottesman2014constant_overhead}, and quantum LDPC codes have been proposed as alternatives to
surface-code-style layouts for reducing overhead~\cite{breuckmann_eberhardt2021_qldpc_review}. Recent work on
reconfigurable atom arrays also emphasizes finite-size high-rate codes with practical logical error rates
~\cite{zhao_2026_ultra_high_rate}. For this reason, an explicit high-rate finite-length code is useful only if its
algebraic constraints and decoder performance are both checked directly.

Finite-length performance depends on decoding as well as construction. Belief propagation combined with
ordered-statistics-decoding post-processing has been used successfully for degenerate quantum LDPC
codes~\cite{panteleev_kalachev2021degenerate,roffe2020decoding}, building on ordered statistics decoding for
classical linear codes~\cite{fossorier_lin1995osd}. Trapping-set analysis for quantum LDPC
codes~\cite{raveendran_vasic2021trapping} gives another way to analyze small residual syndromes after iterative
decoding. Recent finite-length work has also studied non-binary LDPC-based quantum error correction near the
coding-theoretic bound~\cite{komoto_kasai_2025_near_bound}, error-floor reduction, and rapid finite-length
error-rate transitions under joint belief-propagation decoding
~\cite{kasai_2025_error_floors,komoto_kasai_2025_sharp_transitions}.

On the construction side, recent work has focused on the interaction between CSS orthogonality, regular sparse
matrices, short-cycle control, and finite circulant-permutation-matrix (CPM) or affine-permutation lifts
~\cite{kasai_2026_orthogonality_barrier,okada_kasai_2026_affine_coset,okada_kasai_2026_square_base}. The
two-branch finite-field base construction in~\cite{okada_kasai_2026_twobranch} separates the base-stage regularity
and orthogonality checks from the lift-stage search. This paper applies that approach to a column-weight-three,
row-weight-eighteen target with rate close to two thirds, and then gives a CPM lift of degree one hundred one.
The result is not presented as an asymptotic family. The contribution is the explicit construction of one high-rate
CSS LDPC instance, together with direct verification of its ranks, orthogonality, row and column weights, same-type
girth, exclusion of specified low-weight logical supports, decoding measurements, and the exact distance
$d=18$.  A structural argument further shows that 18 is an unavoidable upper bound for every affine-in-$t$ CPM
lift of this particular base satisfying CSS orthogonality when the prime lift degree exceeds 19.

\section{Two-Branch Finite-Field Base}
\label{sec:finite-field-base}

Throughout the paper, binary vectors are column vectors and are denoted by bold symbols when they are written as
vectors. The notation ``same-type Tanner graph'' means the Tanner graph of one parity-check matrix alone, either
$H_X$ or $H_Z$; this is a descriptive term used to distinguish those graphs from the bipartite incidence relation
between $X$- and $Z$-checks used in the CSS orthogonality test.

This section specifies the base matrices before the CPM lift. The role of the base is to fix the row and column
weights, to make CSS orthogonality a finite-field coset condition, and to remove same-type 4-cycles before any lift
coefficients are chosen.

\subsection{Field and Indices}

Let the field be $\mathbb{F}_{19}$, and let
\begin{equation*}
  M=\{1,4,16,7,9,17,11,6,5\}
\end{equation*}
be the order-9 multiplicative subgroup of $\mathbb{F}_{19}^{\times}$.
The target column weight is $J=3$, and the target row weight is $L=18$.

Base columns are indexed by
\begin{equation*}
  (b,t,h)\in \{0,1\}\times \mathbb{F}_{19}\times M,
\end{equation*}
where $b$ is the branch, $t$ is the field variable, and $h$ is the subgroup element.
Hence the base length is
\begin{equation*}
  n_0=2\cdot 19\cdot 9=342.
\end{equation*}
Rows on each side $S\in\{X,Z\}$ are indexed by
\begin{equation*}
  (g,r)\in \{0,1,2\}\times \mathbb{F}_{19},
\end{equation*}
so each side has $3\cdot 19=57$ base rows.

\subsection{Base Incidence Rule}

The coefficient matrices used in this construction are
\begin{equation*}
A=
\begin{pmatrix}
0&16&17\\
0&2&14
\end{pmatrix},
\qquad
B=
\begin{pmatrix}
4&10&11\\
11&10&5
\end{pmatrix}.
\end{equation*}
Here $A_{b,g}$ is used on the $X$ side and $B_{b,g}$ on the $Z$ side.

A column $(b,t,h)$ is connected on the $X$ side to the row $(g,r)$ satisfying
\begin{equation*}
  r=t+A_{b,g}h \pmod {19}.
\end{equation*}
On the $Z$ side, the same column is connected to the row $(g,r)$ satisfying
\begin{equation*}
  r=t+B_{b,g}h \pmod {19}.
\end{equation*}
For a fixed column, there is one edge for each $g=0,1,2$, so the column weight is 3.
For a fixed row, each choice of $b\in\{0,1\}$ and $h\in M$ determines a unique $t$, so the row weight is
$2|M|=18$.
The block structure of these base parity-check matrices is shown in Fig.~\ref{fig:base-code-bitmap}.

\begin{figure}[H]
\centering
\includegraphics[width=0.95\linewidth]{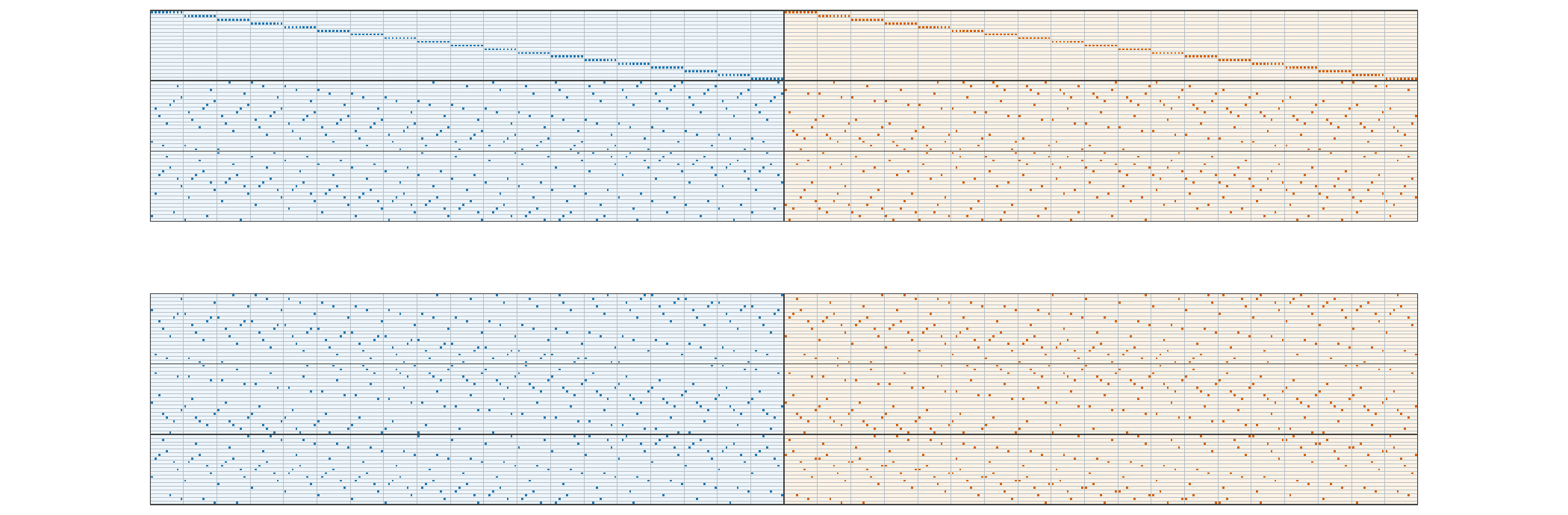}
\caption{Block-colored bitmap of the base parity-check matrices $H_X^{(0)}$ and $H_Z^{(0)}$.
The natural macroblock structure is $3$ row groups by $2$ column-branch blocks.  Each row group
consists of 19 finite-field-coordinate rows, and each column branch is subdivided by the finite-field
coordinate into 19 blocks of 9 columns.  The base CSS code has parameters $[[342,232,6]]$.}
\label{fig:base-code-bitmap}
\end{figure}

\subsection{Quotient-Coset Conditions}

Same-type 4-cycle exclusion and CSS orthogonality are checked by quotient-coset tests.
For the $X$ side and two distinct row groups $g\ne g'$, define
\begin{equation*}
  \Delta_0=A_{0,g}-A_{0,g'},\qquad
  \Delta_1=A_{1,g}-A_{1,g'}.
\end{equation*}
The same-type condition requires $\Delta_0,\Delta_1\ne0$ and
\begin{equation*}
  \Delta_0/\Delta_1\notin M.
\end{equation*}
The same condition is imposed on $B$ for the $Z$ side.
This ensures that any two rows on the same side intersect in at most one column, eliminating same-type 4-cycles.

For CSS orthogonality, for an $X$-row group $g$ and a $Z$-row group $g'$, define
\begin{equation*}
  \Delta_0=A_{0,g}-B_{0,g'},\qquad
  \Delta_1=A_{1,g}-B_{1,g'}.
\end{equation*}
The cross condition requires $\Delta_0,\Delta_1\ne0$ and
\begin{equation*}
  \Delta_0/\Delta_1\in M.
\end{equation*}
Then any $X$-row and $Z$-row intersect in either 0 or 2 columns, which is even over $\mathbb{F}_2$.
Therefore $H_XH_Z^T=0$ holds at the base level.
This coefficient pair passes these tests. Thus the base matrices already have the required row and column
weights, no same-type 4-cycles, and exact CSS orthogonality.

\section{CPM Lift and Construction Procedure}
\label{sec:cpm-lift}

The lift replaces each nonzero base entry by a circulant permutation matrix. The purpose of the lift is to increase
the blocklength while preserving the verified base-level constraints. The additional restriction introduced here is
that each shift is affine-linear in the finite-field coordinate $t$; this keeps the lift constraints linear over
$\mathbb{Z}_{101}$.

\subsection{Lift Definition}

Each 1-entry in the base matrix is replaced by a $P\times P$ cyclic permutation matrix, with $P=101$.
The lift degree is not fixed uniquely by the theory; it is a design parameter in the finite search used here.
We choose $P$ to be prime and larger than $q=19$ so that the lifted constraints can be handled by linear algebra over
the field $\mathbb{Z}_P$.
The value $P=101$ gives a code of length $342P$ while still allowing the search to find coefficients satisfying CSS
orthogonality, all same-type 6-cycle nonclosure constraints, and the exclusion constraints for low-weight logical
supports identified during the iterative measurements.
Increasing $P$ gives more possible values for the lift coefficients, but it also increases the blocklength and the
cost of verification and decoder measurements.
Thus $P=101$ is not claimed to be theoretically optimal for distance or threshold; it is the lift degree for which this
finite-length construction balances the construction constraints with computational tractability.
If the edge between base row $i$ and base column $j$ has shift $\sigma(i,j)\in\mathbb{Z}_{101}$, then lifted row
$(i,s)$ is connected to
\begin{equation*}
  (j,\,s+\sigma(i,j)\bmod P).
\end{equation*}

Rather than choosing all edge shifts independently, we restrict them to be affine-linear in the base variable $t$:
\begin{equation}
  \sigma_S(g,b,h;t)=c_{S,g,b,h}+d_{S,g,b,h}t \pmod {101}.
\label{eq:affine-shift}
\end{equation}
The number of variables is $2\cdot3\cdot2\cdot9\cdot2=216$.

The purpose of this affine-shift restriction is not only to reduce the number of lift variables.
It preserves the algebraic structure of the two-branch finite-field base, and it turns the main lifted constraints into
linear algebra over $\mathbb{Z}_{101}$.
In particular, lifted CSS orthogonality becomes a system of linear equations in the coefficients
$c_{S,g,b,h},d_{S,g,b,h}$, and each lifted same-type 6-cycle closure condition becomes the vanishing of a
linear form.
Thus the construction can first solve the orthogonality equations and then search the resulting affine solution space
while avoiding the nonzero linear forms corresponding to unwanted 6-cycles and identified low-weight logical-support
modes.
Choosing independent arbitrary CPM shifts would give more freedom, but it would also lose this finite-field
parametrization and make the simultaneous enforcement of orthogonality, girth, and forbidden-support constraints a
larger nonlinear search problem.

\subsection{Lifted Orthogonality}

Suppose a base $X$-row $x$ and a base $Z$-row $z$ intersect in two columns $j_0,j_1$.
To preserve even overlap after lifting, the two intersections must occur at the same sheet difference.
This gives the linear equation
\begin{equation}
  \sigma_X(x,j_0)-\sigma_Z(z,j_0)
  -\sigma_X(x,j_1)+\sigma_Z(z,j_1)=0\pmod {101}.
\label{eq:lifted-orthogonality}
\end{equation}
Substituting the affine-shift form in~\eqref{eq:affine-shift} into~\eqref{eq:lifted-orthogonality} gives a linear
equation in the lift coefficients. All such cross-row pairs produce 1539 linear equations.
After reduction, the equality system used for the lift search has rank 167, leaving 49 free variables.

\subsection{Excluding Same-Type 6-Cycles}

For each base same-type 6-cycle $C$, let $\ell_C(c,d)\in\mathbb{Z}_{101}$ be the signed sum of edge shifts
around the cycle.
The lifted cycle closes if and only if
\begin{equation}
  \ell_C(c,d)=0\pmod {101}.
\label{eq:six-cycle-closure}
\end{equation}
Thus all lifted same-type 6-cycles are excluded by requiring $\ell_C(c,d)\ne0$ for every base same-type 6-cycle in
\eqref{eq:six-cycle-closure}.

The base has 5472 same-type 6-cycles on each side, 10944 in total.
After reduction by the lifted-orthogonality equality system, these give 3749 distinct nonzero linear-form
constraints.
A randomized search over the 49-dimensional affine solution space found a point satisfying all of them.

\subsection{Eliminating Low-Weight Logical Supports Observed During Construction}
\label{sec:forbidden-supports}

The forbidden-support list is the set of base-column supports that are explicitly forbidden to have a consistent
realization in the next CPM lift.
It was obtained through an iterative construction procedure.
We first built a CPM lift and measured its finite-length decoding behavior.
When a low-weight logical failure was observed, its logical support was projected to base columns, constraints were
added so that this support and its affine equivalents would be incompatible with the next lift, and the lift search
was rerun.
The observed low-weight logical supports were not arbitrary small sets; they were produced by small even-intersection
patterns in the two-branch base.
Of the 110 input supports collected for this construction, 109 had lifted weight 6 and still projected to six base
columns.
Almost all of these weight-6 supports chose three columns from each of the two branches.
Such supports often meet opposite-side checks in 0 or 2 incident edges.
When this happens, the incident edges at each opposite-side check can be paired, and realizability as a lifted logical
support is governed mainly by linear consistency equations for the CPM sheet indices.
Thus the relevant structure is not a same-type short cycle itself, but a small base support with even opposite-side
check incidences whose sheet-consistency equations are satisfied by the chosen CPM shifts.
Because the finite-field coordinates have affine symmetries, each support identified by the measurements represents an orbit of equivalent
supports, so the affine equivalents were excluded together with it.
The supports accumulated through this iterative design procedure were organized as entries containing the side, an
occurrence count, the residual syndrome weight, the logical weight, and the support columns.
The support fields contain lifted column indices for putative $X$- and $Z$-type logical supports.
For lift degree $P=101$, a lifted column index $u$ is written as
\begin{equation*}
  u=Pj+s,\qquad 0\le s<P,
\end{equation*}
and projected to the base column $\pi(u)=j=\lfloor u/P\rfloor$ by discarding the sheet index $s$.
The base column $j$ is decoded as
\begin{equation*}
  j=b\,|\mathbb{F}_{19}|\,|M|+t\,|M|+i_h,\qquad
  b\in\{0,1\},\quad t\in\mathbb{F}_{19},\quad h_i\in M.
\end{equation*}
The 110 input supports reduce after projection and deduplication to 69 base supports
(39 on the $X$ side and 30 on the $Z$ side).
We also included the affine orbit induced by the finite-field symmetry
\begin{equation*}
  t\mapsto at+b,\qquad h\mapsto ah,\qquad a\in M,\ b\in\mathbb{F}_{19}.
\end{equation*}
After orbit expansion and deduplication, this produced 285 forbidden supports:
228 on the $X$ side and 57 on the $Z$ side.
The data specifying this forbidden-support exclusion step, together with the complete CPM lift coefficients of the
constructed code, are provided on the homepage cited in~\cite{kasai_homepage}.

For each forbidden base support, we tested directly whether it can be realized in the lifted code as a codeword or
logical-support pattern.
If the support meets each opposite-side check in an even number of edges, the incident edges at that check can be
paired.
Each choice of such pairings gives linear consistency equations for the CPM sheet indices.
If those equations are all satisfied by the chosen lift coefficients, then that base support still has a compatible
lifted realization and has not been eliminated.
The lift coefficients were chosen so that no such realization remains.
The verification found no forbidden base support for which any pairing pattern satisfied the corresponding
CPM-sheet consistency equations.
Thus none of the 285 forbidden base supports, including their affine equivalents, can be realized in the constructed
code.

\subsection{Construction Algorithm}
\label{sec:construction-algorithm}

The construction procedure combines the algebraic certificates in
Section~\ref{sec:finite-field-base} and the preceding parts of Section~\ref{sec:cpm-lift} into a single
finite-length CSS code.
The purpose is to preserve the base-level regularity and CSS orthogonality, remove lifted same-type 6-cycles, and also
avoid the low-weight logical-support patterns found during the iterative decoding measurements.
The main difficulty is that these requirements interact: CSS orthogonality is an equality constraint, same-type
6-cycle exclusion is a collection of nonzero linear-form constraints, and forbidden-support exclusion depends on
pairing choices inside the support.
The affine CPM shift form in~\eqref{eq:affine-shift} is used so that the first two lifted constraints can be handled by
linear algebra before the remaining nonzero and support-realizability tests are applied.
With this interpretation, the construction proceeds as follows.
\begin{enumerate}
  \item Build the finite-field base using the incidence rules $r=t+A_{b,g}h$ and $r=t+B_{b,g}h$ from
        Section~\ref{sec:finite-field-base}.
  \item Verify the quotient-coset certificate from Section~\ref{sec:finite-field-base}; this checks same-type
        4-cycle exclusion and CSS orthogonality before lifting.
  \item Introduce CPM lift variables $c_{S,g,b,h},d_{S,g,b,h}$ for the affine shift form in~\eqref{eq:affine-shift}.
  \item Enumerate all lifted CSS-orthogonality equations, using the condition in~\eqref{eq:lifted-orthogonality}.
  \item Enumerate the same-type 6-cycle closure conditions in~\eqref{eq:six-cycle-closure} and reduce them modulo the
        lifted-orthogonality equality system.
  \item Search the resulting 49-dimensional equality-system solution space for a point satisfying all nonzero
        6-cycle constraints.
  \item Generate lift-consistency equations for the forbidden support list and reject lift coefficients for which any
        forbidden support remains realizable, as described in Subsection~\ref{sec:forbidden-supports}.
  \item Expand $H_X,H_Z$ and explicitly verify ranks, orthogonality, girth, row weights, and column weights; the
        resulting code specification and decoder measurements are reported in Subsection~\ref{sec:resulting-code}
        and Section~\ref{sec:verification-measurements}.
\end{enumerate}

\subsection{Resulting Lifted Code}
\label{sec:resulting-code}

Expanding the base matrices with the selected CPM lift coefficients gives two binary parity-check matrices
$H_X$ and $H_Z$.
The base length is $n_0=342$ and the lift degree is $P=101$, so the lifted blocklength is
$n=Pn_0=34542$.
Each side has $57P=5757$ check rows, and the base incidence rule preserves column weight 3 and row weight 18.
Rank computation gives $\operatorname{rank}H_X=\operatorname{rank}H_Z=5755$, hence the CSS dimension is
$k=n-\operatorname{rank}H_X-\operatorname{rank}H_Z=23032$.
Thus the constructed lifted code is a near-rate-$2/3$ $(3,18)$-regular CSS LDPC code with parameters
$[[34542,23032,18]]$.
The main specifications are summarized in Table~\ref{tab:main-parameters}.

\begin{table}[H]
\centering
\caption{Main parameters of the constructed code.}
\label{tab:main-parameters}
\begin{tabular}{lr}
\toprule
Quantity & Value\\
\midrule
Base length $n_0$ & 342\\
Base rows per side & 57\\
Base $\operatorname{rank}H_X^{(0)}$ & 55\\
Base $\operatorname{rank}H_Z^{(0)}$ & 55\\
Base dimension $k_0$ & 232\\
Base distance $d_0$ & 6\\
Base CSS parameters $[[n_0,k_0,d_0]]$ & $[[342,232,6]]$\\
Lift degree $P$ & 101\\
Code length $n$ & 34542\\
Rows of $H_X$ & 5757\\
Rows of $H_Z$ & 5757\\
$\operatorname{rank}H_X$ & 5755\\
$\operatorname{rank}H_Z$ & 5755\\
$k=n-\operatorname{rank}H_X-\operatorname{rank}H_Z$ & 23032\\
Rate $k/n$ & 0.6667824677\\
Row weight & 18\\
Column weight & 3\\
Same-type girth & 8\\
Minimum distance $d$ & 18\\
\bottomrule
\end{tabular}
\end{table}

The base ranks are $\operatorname{rank}H_X^{(0)}=\operatorname{rank}H_Z^{(0)}=55$, giving base dimension
$k_0=232$ and base rate $0.6783625731$.
An exhaustive small-weight search finds no nontrivial base logical operator of weight below 6 and finds
weight-6 logical operators on both CSS sides; hence the base CSS parameters are
$[[n_0,k_0,d_0]]=[[342,232,6]]$.
The exact lifted distance and the structural limitation of the affine lift ansatz are established in
Subsection~\ref{sec:distance-upper-bound}.

\subsection{Exact Minimum Distance and an Affine-Lift Ceiling}
\label{sec:distance-upper-bound}

The affine dependence in~\eqref{eq:affine-shift} imposes a distance ceiling that is independent of the numerical
choice of its coefficients.

\begin{theorem}[Affine-lift distance ceiling]
For the two-branch base specified by $A$, $B$, and $M$ above, let $P>19$ be prime.  Every CPM lift whose shifts have
the affine form~\eqref{eq:affine-shift} and satisfy the lifted CSS-orthogonality equations
in~\eqref{eq:lifted-orthogonality} has quantum distance at most 18.
\end{theorem}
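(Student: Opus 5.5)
We will prove the bound by writing down an explicit nontrivial logical operator of weight at most 18. Its support is built from the rows of the opposite-side check matrix, and the affine dependence of the shifts on $t$ is what makes it available in every admissible lift.

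\emph{Candidate word.} Fix a $Z$-side base row group $g'$. Consider the vector $v$ supported on a single lifted $Z$-check row, i.e.\ base row $(g',r)$ at sheet $s$. It has weight 18 and lies in $\ker H_X$ by lifted orthogonality, but it is a stabilizer. So we instead look for a \emph{twisted} copy: the word $w$ on the same 18 base columns $(b,t,h)$ with $t=r-B_{b,g'}h$, but at sheets $s'_{b,h}=s+\sigma_Z(g',b,h;t)+\delta_{b,h}$. The correction $\delta$ is chosen so that $w$ remains in $\ker H_X$.

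\emph{Kernel condition.} By the base cross condition, every $X$-row meets this support in 0 or 2 columns $j_0,j_1$. The requirement that $w$ meets each lifted $X$-check evenly becomes a condition on the corrections at each such pair:
\begin{equation*}
\delta_{j_0}=\delta_{j_1}.
\end{equation*}
This uses \eqref{eq:lifted-orthogonality}, which already equalizes the $\sigma$-parts.

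Now use the affine form \eqref{eq:affine-shift} together with the translation symmetry $t\mapsto t+1$ of the base, which maps row $(g',r)$ to $(g',r+1)$. Comparing the word on row $r$ with the same pattern on row $r+1$, the shift difference along each edge is $d_{Z,g',b,h}$, independent of $t$. This suggests the following concrete candidate. Take
\begin{equation*}
w=\sum_{r\in\mathbb{F}_{19}} x^{e_r}\,v_{(g',r)}
\end{equation*}
as a combination of lifted check rows with a geometric sheet twist. Alternatively, work in $\mathbb{F}_2[x]/(x^P-1)$: write the lifted parity-check matrices as $57\times342$ polynomial matrices. The affine-in-$t$ structure makes each block depend on $t$ only through a monomial factor $x^{d t}$. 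I would then exhibit, uniformly in the coefficients, an 18-term vector $u$ (one monomial per column of a single base row support) satisfying $H_X u=0$ but $u\notin\operatorname{rowspace}H_Z$.

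\emph{Nontriviality.} I expect this to be the main obstacle. Showing $H_Xu=0$ reduces to the pairing equations above, which orthogonality guarantees. Showing that $u$ is not a stabilizer requires a weight/structure argument: any element of the $Z$-stabilizer space supported within 18 base columns forming one row-support must be a combination of $Z$-rows whose base supports lie in that set. By the same-type no-4-cycle property, distinct base rows share at most one column, so only the lifted copies of row $(g',r)$ itself qualify. These are the $P$ shifts $x^a v$. Thus it suffices to choose sheets that are not a common shift of the row's own sheets, i.e.\ $\delta$ not constant.

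So the crux is to find a nonconstant $\delta$ on the 18 columns that is constant on every pair cut out by $X$-rows. Equivalently, the graph on the 18 columns, with edges given by these pairs, must be disconnected. I would verify from $A,B,M$ (a finite check over $\mathbb{F}_{19}$, independent of $P$ and of $c,d$) that this pairing graph splits, plausibly into the two branches $b=0,1$ or into $M$-cosets. If it is connected, I would fall back on the second branch row $(g'',r')$ and build $u$ from two half-row supports of total weight 18, again reducing the claim to a $P$-independent finite-field check. The hypothesis $P>19$ prime is used to ensure that the monomials $x^{d t}$ for distinct $t$ do not collide, so the nonconstancy survives reduction mod $x^P-1$. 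Applying the symmetric argument on the $X$ side bounds $d_X$ as well, giving $d\le 18$.
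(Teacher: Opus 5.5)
There is a genuine gap. Your construction needs the pairing graph on a $Z$-row support to be disconnected, but for this base it is connected for every $Z$-row, so you only obtain stabilizers.

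\emph{Why the twisted $Z$-row fails.} For a $Z$-row of group $g'$, an $X$-row of group $g$ that meets it does so in one branch-0 column $(0,h)$ and one branch-1 column $(1,\mu_g h)$, where $\mu_g=(A_{0,g}-B_{0,g'})/(A_{1,g}-B_{1,g'})\in M$. So the pairing graph is the bipartite graph on $\{0,1\}\times M$ with edges $(0,h)\sim(1,\mu_g h)$ for $g=0,1,2$. It never splits by branch, and it is connected if and only if the ratios $\mu_g/\mu_{g''}$ generate $M$. The multipliers are $(9,5,17)$, $(1,4,16)$ and $(6,11,7)$ for $g'=0,1,2$. In each case some ratio equals $4$, which generates $M$. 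Hence $\delta$ is forced to be constant, and your twisted word is just a sheet shift of the $Z$-check itself. Your $\sum_r x^{e_r}v_{(g',r)}$ is a sum of stabilizers anyway. The half-row fallback, read as halves of two different $Z$-rows with a common offset, has coefficient pair $(B_{0,g_1},B_{1,g_2})$ with $g_1\neq g_2$. Every such pair violates the base cross condition against some $X$-row group, so it is not even in the base kernel.

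\emph{The missing idea.} You need a support of row shape that is \emph{not} a $Z$-row. The paper takes $\{(0,6-6h,h),(1,6-9h,h):h\in M\}$, whose coefficient pair $(6,9)$ passes the cross condition against all three $A$-row groups. Its pairing graph is again connected, with 27 edges and cycle rank 10. Lifting it is therefore not a direct instance of \eqref{eq:lifted-orthogonality} as in your kernel step. One must show that the ten fundamental cycle forms lie in the span of the 1539 lifted CSS equations; the paper certifies this as $456L_i=\sum_j m_{ij}E_j$ over $\mathbb{Z}$. This is where the affine ansatz and the hypothesis $P>19$ prime are used, since $P\nmid 456=2^3\cdot3\cdot19$. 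The hypothesis plays no role in any non-collision of monomials $x^{dt}$.

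\emph{Nontriviality.} Your argument does not work as stated: a vector in the row space of $H_Z$ supported in a set need not be a combination of rows supported in that set, because rows outside can cancel there. The robust route is an odd-overlap witness in $\ker H_Z$. The all-sheet lift of a base $X$-logical with base support $\{0,1,67,73,83,304\}$ lies in $\ker H_Z$ for every lift and meets the support in exactly one lifted column. Finally, one nontrivial logical on one side already gives $d\le 18$, so no symmetric $X$-side argument is needed.
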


\begin{proof}
All arithmetic in the base coordinates is modulo 19.  Consider the 18 base columns
\begin{equation}
 \mathcal{S}=\{(0,6-6h,h),(1,6-9h,h):h\in M\}.
 \label{eq:unavoidable-support}
\end{equation}
Each of the 27 $X$-checks met by $\mathcal{S}$ contains exactly two of these columns.  Pairing the two columns at
each occupied check gives a connected graph with 18 vertices and 27 edges.  A choice of one lift index for every
column of $\mathcal{S}$ has zero $H_X$ syndrome precisely when the corresponding edge-difference equations are
consistent.  The graph has cycle-space dimension $27-18+1=10$, so consistency is equivalent to the vanishing of ten
fundamental cycle forms $L_1,\ldots,L_{10}$ in the affine coefficients.

The 1539 lifted CSS equations are denoted by $E_1,\ldots,E_{1539}$.  Exact rational elimination followed by clearing
denominators gives, for each $i$, the integer identity
\begin{equation}
 456 L_i=\sum_{j=1}^{1539}m_{ij}E_j,\qquad m_{ij}\in\mathbb{Z}.
 \label{eq:affine-ceiling-certificate}
\end{equation}
The attached verification program reconstructs the base, the 216 affine variables, all CSS equations, and the ten
cycle forms, and checks these identities without numerical tolerances.  Since the prime divisors of
$456=2^3\cdot3\cdot19$ are at most 19, the factor 456 is invertible in $\mathbb{Z}_P$.  Hence CSS orthogonality
forces every $L_i$ to vanish, and the support in~\eqref{eq:unavoidable-support} lifts to a weight-18 vector
$z\in\ker H_X$.

It remains to show that $z$ is not a stabilizer.  The base $X$-logical with zero-based column support
$\{0,1,67,73,83,304\}$ lies in the kernel of the base $H_Z$.  Selecting all $P$ lift indices above these six
columns therefore gives a vector $x\in\ker H_Z$ for every CPM lift.  Its base support intersects
$\mathcal{S}$ only in column 0, so $xz^T=1$ over $\mathbb{F}_2$.  A vector in the row space of $H_Z$ is orthogonal
to every vector in $\ker H_Z$; consequently $z$ is a nontrivial $Z$-logical operator.
\end{proof}

For the published $P=101$ lift, a complete finite search also supplies the matching lower bound.

\begin{theorem}[Exact distance of the constructed code]
The constructed code has parameters $[[34542,23032,18]]$.
\end{theorem}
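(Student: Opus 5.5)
The plan splits the statement into its three parameters. The length $n=342\cdot101=34542$ follows from the lift definition. The dimension $k=23032$ follows from the exact rank computation $\operatorname{rank}H_X=\operatorname{rank}H_Z=5755$ over $\mathbb{F}_2$ reported in Subsection~\ref{sec:resulting-code}. The work is therefore in $d=18$, which I would split into an upper bound and a lower bound.

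\textbf{Upper bound.} This is immediate from the preceding affine-lift ceiling theorem. The published lift has $P=101>19$ prime, affine shifts of the form~\eqref{eq:affine-shift}, and it satisfies~\eqref{eq:lifted-orthogonality}. Hence the weight-18 vector $z$ lifted from the support~\eqref{eq:unavoidable-support} is a nontrivial $Z$-logical, and $d\le18$. As an independent check, I would also exhibit $z$ explicitly for the published coefficients. One verifies $H_Xz=0$ directly and verifies $x^Tz=1$ for the explicit $x\in\ker H_Z$.

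\textbf{Lower bound.} We must show that every nonzero $v\in\ker H_X\cup\ker H_Z$ has weight at least 18. This is slightly stronger than needed, since it does not use degeneracy, but it suffices. I would proceed in three steps.

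First, reduce by symmetry. The CPM structure gives a cyclic $\mathbb{Z}_{101}$ shift action on sheets that preserves both kernels. The affine maps $t\mapsto at+b$, $h\mapsto ah$ act on the base. Combined with the affine-in-$t$ shifts, these maps induce automorphisms of the lifted code, possibly with a sheet relabeling, and I would verify this for the published coefficients. So one may assume the lexicographically smallest support element lies in a fixed orbit representative, say sheet $0$ of a canonical base column.

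Second, enumerate. Every column has weight 3 and every row has weight 18. A kernel vector meets each incident check an even number of times. I would grow supports by a depth-first search from the seed column: pick an unsatisfied check and branch over which other column in that check enters the support. Branches are pruned when the current weight plus a lower bound on the additional columns required exceeds 17. The lower bound is the number of unsatisfied checks divided by 3, since each new column fixes at most 3 checks. Girth 8 keeps the early branching sparse and forces rapid expansion of unsatisfied checks. Running this on both $H_X$ and $H_Z$ and finding no completion of weight at most 17 gives $d\ge18$.

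\textbf{Main obstacle.} I expect the exhaustive search to be the main obstacle, because the branching factor is 17 per check. I would first try to tame it with canonical-ordering pruning, so that only supports whose minimal element is the seed are explored. I would also use a meet-in-the-middle split of the unsatisfied-check frontier. If that is still too slow, a fallback is to project onto base supports of size at most 17 with even opposite incidences. One enumerates these with the base code structure and then checks sheet consistency through the linear pairing equations, as in Subsection~\ref{sec:forbidden-supports}. This moves most of the combinatorics to the 342-column base.
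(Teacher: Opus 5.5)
Your treatment of $n$, $k$ and $d\le 18$ is the paper's. The ceiling theorem applies to the published lift because $P=101>19$ is prime. The paper's proof also checks the explicit weight-18 support directly, using the weight-606 all-sheet lift $x\in\ker H_Z$, which overlaps it in exactly one coordinate.

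For $d\ge 18$ both arguments are exhaustive computer searches, but they are organized differently. You grow supports column by column in the lifted Tanner graph, branching at an unsatisfied check and pruning with $|T|+\lceil u/3\rceil$. The paper first proves a structural reduction. Pairing the even incidence set at every occupied check turns a minimum kernel support into a connected cubic graph whose three row-group colour classes are perfect matchings. Tanner girth 8 makes this graph simple and triangle-free. The paper then enumerates all such coloured graphs once (27472 types for weights 6--16, independent of the lift) and embeds each type into each Tanner graph by backtracking from 18 root orbits.

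Your search is logically complete. It handles checks met four or more times, and several sheets over one base column, automatically, and it needs no isomorphism rejection. Its only pruning, however, is the syndrome bound. That bound lets tree-like partial supports survive up to 12 columns, so a single seed can cost on the order of $17^{11}$ nodes. In the paper's embedding, every non-tree edge of the fixed type is an immediate equality test between two already placed columns. Most of the combinatorics is thereby moved into a small, code-independent enumeration.

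Three points to tighten.

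(i) Each column meets exactly one check per row group, so $|S|$ is a sum of even intersections and every kernel vector has even weight. You therefore only need to exclude weights up to 16, which lowers your threshold and saves roughly a factor of 17.

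(ii) Do not rely on $t\mapsto at+b$, $h\mapsto ah$ lifting to automorphisms. Such a change of variables must be absorbed by a sheet relabeling compatible with the coefficients $c_{S,g,b,h},d_{S,g,b,h}$, and that is not automatic. The paper uses only two verified automorphisms, the $t$-translation and the sheet shift. You should therefore expect 18 seeds of orbit size $19\cdot101=1919$ per side, not one canonical base column. Also, the ``minimum element is the seed'' rule can be combined with symmetry only under an ordering adapted to the group action. For translation-type automorphisms, for example, use lexicographic order in $(b,h,t,s)$ with seeds $(b,h,0,0)$.

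(iii) Your base-projection fallback is incomplete as stated. A lifted kernel vector may occupy several sheets over the same base column. Its projected set then need not have even incidences, and its mod-2 projection can even vanish. You would have to enumerate base multisets with several sheet variables per base column.
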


\begin{proof}
One explicit realization of the logical operator above has zero-based lifted support
\begin{equation*}
\begin{split}
\{&0,1055,2140,4778,8076,8685,13625,15192,16895,\\
  &17592,18676,19707,22417,24743,26195,30218,31899,34494\}.
\end{split}
\end{equation*}
Direct multiplication gives $H_Xz^T=0$.  The all-sheet lift $x$ used in the preceding proof has weight 606,
satisfies $H_Zx^T=0$, and overlaps $z$ in exactly one coordinate.  This independently verifies that $z$ is a
nontrivial logical operator and gives $d\le18$.

For the lower bound, fix either $H=H_X$ or $H=H_Z$ and let $S$ be a minimum nonzero kernel support.  Every column
meets exactly one check in each of three row groups.  Pairing the variables incident to each occupied check converts
$S$ into a connected, simple, triangle-free cubic graph whose three edge colors are perfect matchings.  Simplicity
and triangle-freeness follow from Tanner girth 8.  This reduction also covers checks incident to four or more
variables, because their even incidence sets are paired before the graph is formed.

The union of two color classes is a disjoint union of alternating even cycles.  Enumerating the third perfect
matching, followed by color-preserving isomorphism reduction, gives the complete graph counts in
Table~\ref{tab:distance-enumeration}.  Each graph was then embedded into each Tanner graph by an exhaustive
backtracking search.  Two verified automorphisms reduce the possible root images to 18 orbits of size 1919,
covering all 34542 columns.  No embedding was found on either side.

\begin{table}[H]
\centering
\caption{Complete enumeration used to exclude nonzero kernel vectors of weight below 18.}
\label{tab:distance-enumeration}
\begin{tabular}{rrrr}
\toprule
Weight & Colored graph types & $\ker H_X$ & $\ker H_Z$\\
\midrule
6  & 1     & none & none\\
8  & 10    & none & none\\
10 & 22    & none & none\\
12 & 226   & none & none\\
14 & 1838  & none & none\\
16 & 25375 & none & none\\
\midrule
Total & 27472 & none & none\\
\bottomrule
\end{tabular}
\end{table}

Every nonzero kernel vector has even weight, and weights 2 and 4 are already excluded by simplicity and
triangle-freeness.  The enumeration therefore proves that both kernels have no nonzero vector below weight 18.
In particular, every nontrivial pure $X$- or $Z$-logical has weight at least 18, which gives $d\ge18$ for the CSS
code.  Combining the two bounds proves $d=18$.  The matrix hashes, logical witness, automorphisms, graph generator,
embedding code, and completed search records are supplied with the source and at the data page in
Ref.~\cite{kasai_homepage}.
\end{proof}

\section{Decoding Method}
\label{sec:decoding-method}

The decoder used in the experiments is a log-domain log-likelihood-ratio (LLR) joint belief-propagation (BP)
decoder followed by deterministic post-processing.
Here joint BP refers to the sum-product decoder on the CSS syndrome-decoding factor graph formed by two Tanner
graphs coupled through the local joint prior at each qubit~\cite{kasai_2026_factor_graph_css_decoding}.
Syndrome failures and logical failures are treated as different outcomes: a trial with a nonzero residual syndrome is
classified as a syndrome failure, while a zero-syndrome residual is tested for logical nontriviality separately.
This separation is necessary because a large residual syndrome is a decoder failure but is not itself evidence for a
low-weight logical operator.

The post-processing stage is designed for the cases where BP leaves a small residual syndrome.
For very small residuals, the decoder first applies deterministic local repair tests that verify the produced
syndrome before applying a correction.
When the residual syndrome has weight two, the decoder also uses an elementary-trapping-set (ETS) library, motivated
by trapping-set analysis for quantum LDPC codes~\cite{raveendran_vasic2021trapping}.
Here an ETS entry is a checked local repair pattern whose check-induced subgraph has degree-one odd checks and degree-two
even checks, and whose odd-check set is exactly the two unsatisfied checks.
If these local and ETS tests do not apply, the decoder can fall back to ordered-statistics-decoding (OSD)-style
repair or to a more general syndrome solve, depending on the experiment being performed.

Each ETS-library entry is represented as $(S,c_0,c_1,V)$, where $S\in\{X,Z\}$ is the side to be repaired,
$c_0,c_1$ are the two odd checks, and $V$ is the variable set to flip.
When the residual syndrome is $\{c_0,c_1\}$, the decoder retrieves $V$ and applies it only after verifying that
$V$ produces exactly the residual syndrome.
Thus a library entry is used as a checked correction rather than as an unconditional flip.

\section{Numerical Results}
\label{sec:verification-measurements}

At $p=0.01$, $10^8$ trials were completed with the LLR joint-BP decoder and deterministic post-processing.
No syndrome failure and no logical failure were observed, so the empirical frame error rate (FER) at this point is
zero.
Using the rough one-sided 95\% zero-failure rule $3/N$, the corresponding experimental upper bound at $p=0.01$ is
about $3\times10^{-8}$.

For higher-FER points, the measurement should not rely on a small number of early failures.
We therefore measured an FER sweep from $p=0.034$ downward.
For the lower-FER points, independent decoder processes were combined only to reduce
wall-clock time; each process used the same LLR joint-BP decoder and deterministic post-processing.
All failures in this sweep were syndrome failures; no logical failure was observed.
The combined point estimates are plotted in Fig.~\ref{fig:local-fer-sweep}.

\begin{figure}[H]
\centering
\includegraphics[width=0.92\linewidth]{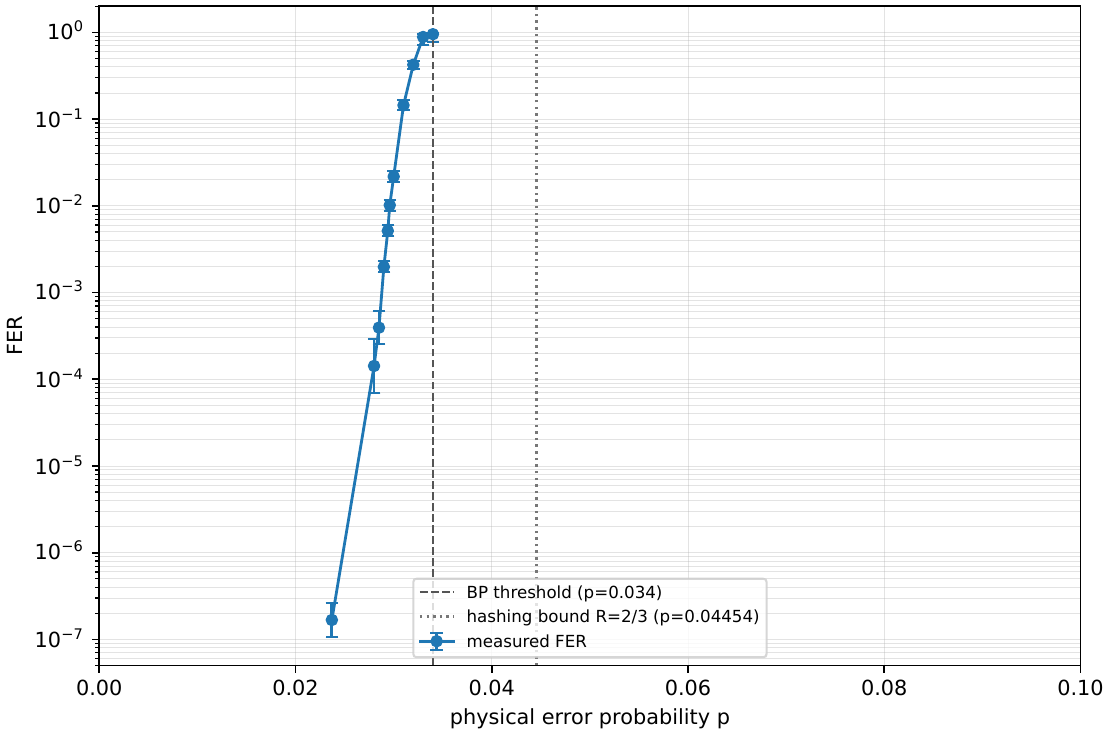}
\caption{FER measurements for the constructed $(3,18)$, $P=101$ lifted code.
The horizontal axis is shown on $0\le p\le0.1$.
The vertical lines mark the BP threshold estimate $p=0.034$ and the rate-$2/3$ hashing bound
$p=0.04454$.  The markers are connected as a visual guide.}
\label{fig:local-fer-sweep}
\end{figure}

The ETS-library branch was tested on failure instances obtained in the $p=0.0237$ measurements.
Five of these failures had $Z$-side residual-syndrome weight two and residual Pauli weight 12.
The corresponding ETS supports all belonged to the same isomorphism orbit.
Using the finite-field translation symmetry of size $q=19$ and the CPM sheet translation of size $P=101$, this orbit
was expanded exhaustively into $19\cdot 101=1919$ ETS entries.
Every expanded entry was validated directly on the lifted $H_X$ Tanner graph by checking that its produced syndrome
is exactly the prescribed two-check syndrome.

For 18 $p=0.0237$ failure instances, the isomorphism-expanded ETS library was combined with local exact repair and weighted-OSD-style
post-processing, 9 of the 18 failures were repaired.
In particular, all five targeted failures with residual-syndrome weight two and residual Pauli weight 12 were repaired.
The remaining nine failures had residual-syndrome weights such as 17, 19, 50, 52, 84, 234, 976, 2367, and 2427, and
are outside the scope of the two-check ETS library used here.

\section{Discussion}

The main feature of the construction is that both stages are expressed as explicit algebraic conditions.
At the base level, regularity, CSS orthogonality, and same-type 4-cycle exclusion reduce to quotient-coset membership
tests.
At the lift level, CSS orthogonality becomes a linear equality system, while same-type 6-cycle closure becomes a family
of nonzero linear-form constraints.
This reduces the finite search space while preserving exact CSS orthogonality and achieving same-type girth 8.

The construction also uses finite-length decoding evidence in a specific way.
Successive lifts are tested by decoding; observed low-weight logical supports are converted into additional forbidden
supports; and the lift search is repeated with the enlarged constraint set.  The exact-distance computation shows
the limitation of this strategy within the affine ansatz: the support in~\eqref{eq:unavoidable-support} is forced by
CSS orthogonality and therefore cannot be removed by changing the affine coefficients.  Avoiding this weight-18 mode
requires a larger shift ansatz, such as non-affine dependence on $t$, followed by new girth and distance checks.

\section{Conclusion}

Using a two-branch multiplicative-coset base over $\mathbb{F}_{19}$ and a $P=101$ CPM lift, we constructed a
$(3,18)$-regular CSS LDPC code with
\begin{equation*}
  n=34542,\qquad k=23032,\qquad k/n=0.6667824677.
\end{equation*}
The code has row weight 18, column weight 3, exact CSS orthogonality, and same-type Tanner girth 8.
The low-weight logical supports accumulated during the iterative construction procedure, together with their affine
equivalents, are not realizable in the constructed lift.
In $10^8$ trials at $p=0.01$, no failure was observed under LLR joint BP with post-processing.
The FER sweep estimates the FER as $5.14\times 10^{-3}$ at $p=0.0294$ and
$3.94\times 10^{-4}$ at $p=0.0285$, with no observed logical failures in the sweep.
A computer-assisted proof establishes the exact parameters $[[34542,23032,18]]$: an explicit weight-18
nontrivial $Z$-logical gives the upper bound, and a complete symmetry-reduced enumeration excludes every nonzero
vector below weight 18 in both parity-check kernels.  More generally, the same base has the structural ceiling
$d\le18$ for every affine-in-$t$ CPM lift satisfying CSS orthogonality whenever the prime lift degree is greater
than 19.

\bibliographystyle{IEEEtran}
\bibliography{refs}

\end{document}